\documentclass[10pt,twocolumn,twoside]{IEEEtran}
\IEEEoverridecommandlockouts

\usepackage{amsmath,amsfonts,amssymb}
\usepackage{algorithm}
\usepackage{algpseudocode}
\usepackage{array}
\usepackage[caption=false,font=normalsize,labelfont=sf,textfont=sf]{subfig}
\usepackage[table]{xcolor}
\usepackage{textcomp}
\usepackage{stfloats}
\usepackage{url}
\usepackage{verbatim}
\usepackage{graphicx}
\usepackage{cite}
\usepackage{mathtools}
\usepackage{xcolor}
\usepackage{amsthm}
\usepackage{booktabs}
\usepackage[]{hyperref} 
\usepackage{multicol}
\usepackage{comment}
\usepackage{pifont}

\theoremstyle{definition}
\newtheorem{definition}{Definition}[section]
\newtheorem{theorem}{Theorem}

\newtheorem{property}{Property}
\newtheorem{prop}{Proposition}
\newtheorem*{remark}{Remark}

\DeclarePairedDelimiter\floor{\lfloor}{\rfloor}

\def\BibTeX{{\rm B\kern-.05em{\sc i\kern-.025em b}\kern-.08em
    T\kern-.1667em\lower.7ex\hbox{E}\kern-.125emX}}

\usepackage{lipsum}

\title{\LARGE \bf 
Making Aggregations Reliable: Realizability Guarantees for Battery Fleets with Heterogeneous Power and Energy Limits
}

\author{Mazen Elsaadany and Mads R. Almassalkhi
\vspace{-0.5cm}
}

\begin{document}
\begingroup
\allowdisplaybreaks

\maketitle

\begin{abstract}
Aggregated battery energy storage systems (BESS) enable large fleets of heterogeneous battery elements to participate in system-level optimization and electricity markets. Scheduling each element independently is computationally impractical at scale. While many aggregate battery models rely on convex relaxations, they often ignore element complementarity constraints, leading to dispatch solutions that may be infeasible when implemented on individual battery elements. This paper develops a realizable composite battery model for parameter-heterogeneous BESS fleets that guarantees feasibility at the element-level while preserving computational tractability. We derive simple linear conditions under which aggregate charging and discharging trajectories can be safely disaggregated while respecting individual power limits, energy limits, and complementarity constraints under a priority-based controller. Numerical experiments in a unit-commitment setting demonstrate that the proposed realizable composite battery formulation produces feasible dispatch solutions. Solve times are effectively independent of system size, unlike micro-model mixed-integer formulations. Solutions obtained from the proposed formulation converge to the optimal benchmark as control granularity is refined. Additional studies illustrate the robustness of the framework to moderate violations of key modeling assumptions, including heterogeneous power-to-energy ratios.

\end{abstract}

\begin{IEEEkeywords}
Battery energy storage systems, aggregation, virtual power plants, convex optimization, mixed-integer optimization.
\end{IEEEkeywords}


\section{Introduction}


FERC Order No.~2222 enabled aggregated battery energy storage systems (BESS) fleets to participate in wholesale markets and provide energy and ancillary services~\cite{FERC2222}. However, explicitly scheduling and dispatching each element in a large fleet is computationally impractical~\cite{MultiBatteryEV2023}. This motivates aggregating and coordinating geographically distributed batteries as virtual power plants (VPPs) that reduce problem complexity while still producing dispatch decisions that can be physically delivered by the underlying fleet.

Most aggregate modeling approaches characterize each battery element's flexibility set and define the fleet-level flexibility as the Minkowski sum of these individual sets\cite{Muller2019AggDisagg,Wen2022ExactAFR,MultiBatteryEV2023,Ozturk2024VertexBased}. Since exact Minkowski sums are intractable at scale, the literature focuses on tractable inner or outer approximations. Examples include zonotope-based approximations, which scale well but can be conservative for heterogeneous populations~\cite{Muller2015Zonotopes,Muller2019AggDisagg}, homothetic and union-based constructions that improve accuracy by combining affine images of a base set~\cite{Nazir2018UnionBasedMinkowski}, and methods that derive the exact aggregate feasible set but require simplifications for long time horizons~\cite{Wen2022ExactAFR}. Recent vertex-based methods construct compact inner approximations from a structured subset of the vertices of the Minkowski sum~\cite{Ozturk2024VertexBased}. Finally, multi-battery models represent heterogeneous fleets using a small number of representative devices to trade off accuracy and complexity~\cite{MultiBatteryEV2023}.

A key challenge is that most aggregate models rely on convex representations and therefore sidestep the non-convex complementarity constraint that prevents simultaneous charging and discharging at the element level. Ignoring this constraint can produce schedules that appear feasible at the aggregate level but are infeasible when implemented on individual elements~\cite{prat_2024,RobustNawaf}. In particular, convex relaxations can underestimate state of energy (SOE), leading to dispatches that cannot be realized without violating element constraints~\cite{RobustNawaf}.

Existing approaches handle complementarity in several ways. Mixed-integer formulations enforce it exactly but scale poorly with fleet size and horizon length~\cite{MILP_for_CC,prat_2024}. Other work derives conditions under which convex relaxations don't violate complementarity, typically under strong assumptions on costs, prices, or parameter values~\cite{Sufficient_Cond_for_Relax,Lin_Exact_Cond_Ex_Ante,positive_prices,Jakob_Pos_Quadratic_Cost,prat_2024}. Penalty-based formulations discourage simultaneous charging and discharging by assigning a penalty term in the objective; however, the resulting guarantees are typically tied to a specific application (e.g., DCOPF) and do not readily generalize to other optimization settings~\cite{garifi_convex_2020}. Robust convex restrictions can guarantee feasibility by bounding worst-case energy evolution, often at the expense of increased conservatism over longer horizons~\cite{RobustNawaf}.

In summary, there is a lack of aggregate battery models that both (i) scale to large heterogeneous fleets and (ii) guarantee realizable dispatch in the presence of charging--discharging complementarity constraints. Other work addressed this challenge for homogeneous fleets by introducing a linear composite model with a priority-based disaggregation strategy that guarantees feasibility at the element level while preserving convexity~\cite{Composite_Homogeneous}. Extending such guarantees to heterogeneous fleets remains an open problem.

This paper addresses this gap by developing a realizable aggregate modeling framework for heterogeneous BESS fleets. The contributions are:
\begin{enumerate}
    \item We extend the composite battery framework to heterogeneous elements and derive a linear aggregate model that captures heterogeneous power and energy limits.
    \item We generalize the priority-based disaggregation strategy and provide sufficient conditions under which any feasible aggregate dispatch is realizable while respecting element-level power limits, energy limits, and complementarity.
    \item We characterize the resulting aggregate feasible region and show how accuracy improves with finer control granularity while maintaining the computational tractability needed for large-scale optimization.
\end{enumerate}

The remainder of the paper is organized as follows. Section~II presents the element-level model and the composite battery formulation. Section~III introduces the priority stack controller and derives sufficient conditions for realizable dispatch. Section~IV evaluates the proposed model using numerical case studies, including scalability and the effects of relaxing key assumptions. Section~V concludes and outlines future research directions.
\section{Composite Battery Modeling}
To enable system-level optimization without explicitly modeling every battery element, we introduce a composite battery model that aggregates the power and energy states of a fleet of $N$ independently controllable battery elements. We first define the element-level model, then derive the corresponding composite model and discuss the challenges that arise when mapping aggregate charging/discharging power to individual battery elements.
\subsection{Battery Elements}
Each element $i\in\{1,\ldots,N\}$ is operated in discrete \emph{control time steps} indexed by $l$, with duration $\delta t$. The charging and discharging powers are denoted by $P^i_\text{c}(l)$ and $P^i_\text{d}(l)$, with efficiencies $\eta^i_\text{c}$ and $\eta^i_\text{d}$, respectively. The SOE of element $i$ at time step $l$ is $E^i(l)$. Each element has power limits $P^i_{\text{c,max}}$, $P^i_{\text{d,max}}$, energy limit $E^i_{\max}$, and an initial SOE $E^i_0$. The element constraints are summarized by
\begin{subequations}
\label{eqn:Elements Constraints}
    \begin{align}
        &E^i(l+1)=E^i(l)+ \delta t\eta^i_\text{c}P_\text{c}^i(l)
        -\delta t\frac{1}{\eta^i_\text{d}}P_\text{d}^i(l) \label{eqn:Elements SOC} \\
        &E^i(0) = E^i_0 \label{eqn:Elements SOC IC}\\
        &0\leq E^i(l+1)\leq {E}_\text{max}^i\label{eqn:Elements SOC Limit} \\
        &0\leq P_\text{c}^i(l)\leq {P}_\text{c,max}^i\label{eqn:Elements Pc Limit}\\
        &0\leq P_\text{d}^i(l)\leq {P}_\text{d,max}^i\label{eqn:Elements Pd Limit}\\
        &0=P_\text{c}^i(l)P_\text{d}^i(l), \label{eqn:Elements CC}
    \end{align}
\end{subequations}
for all $i=1,\ldots,N$ and $l=0,\ldots,L-1$.
\subsection{Composite Battery}
The composite battery aggregates the $N$ elements and is operated on coarser \emph{scheduling time steps} of duration $\Delta t=M\delta t$, where $M\in\mathbb{N}^+$. We index scheduling steps by $[k]$, where $k=0,\ldots,K-1$ and $L=MK$. Let $P_\text{c}[k]$ and $P_\text{d}[k]$ denote the composite charging and discharging powers. A piecewise-constant composite schedule $\{P_\text{c}[k],P_\text{d}[k]\}_{k=0}^{K-1}$ is realized by element-level actions if
\begin{align}
     \sum_{i=1}^{N} P_\text{c}^i(l) =P_\text{c}[k] \quad  \land \quad 
     \sum_{i=1}^{N} P_\text{d}^i(l) = P_\text{d}[k],
    \label{eqn:Sum of Element}
\end{align}
for all $l$ with $k=\floor{\tfrac{l}{M}}$. The composite SOE is defined as the
sum of element SOEs at the start of each scheduling interval:
\begin{align}\label{eqn:SOE mapping}
    E[k]=\sum_{i=1}^{N} E^i(Mk).
\end{align}

Under the assumption of homogeneous charging and discharging efficiencies
($\eta_\mathrm{c}^i=\eta_\mathrm{c}$ and $\eta_\mathrm{d}^i=\eta_\mathrm{d}$),
the composite SOE evolves as
\begin{subequations}
    \label{eqn:Composite SOE}
    \begin{align}
    E[k+1]&=E[k]+ \Delta t\eta_\text{c}P_\text{c}[k] -\Delta t\frac{1}{\eta_\text{d}}P_\text{d}[k] \label{eqn:Composite SOE Update} \\
    E[0] &= \sum_{i}E^i_0. \label{eqn:Composite SOE IC}
\end{align}
\end{subequations}
according to the following dynamics, which follow directly from \eqref{eqn:Elements SOC},\eqref{eqn:Elements SOC IC} and \eqref{eqn:SOE mapping}.

Similarly, the composite power and energy limits follow from \eqref{eqn:Elements SOC Limit}-\eqref{eqn:Elements Pd Limit}, \eqref{eqn:Sum of Element} and \eqref{eqn:SOE mapping}:
\begin{subequations}\label{eqn:Composite Naive Constraints}
\begin{align}
       0\le &P_\text{c}[k]\le \sum_i P_\text{c,max}^i\\
    0\le &P_\text{d}[k]\le \sum_i P_\text{d,max}^i\\
    0\le & E[k]\le \sum_i E_\text{max}^i.
\end{align}
\end{subequations}

While \eqref{eqn:Composite SOE}--\eqref{eqn:Composite Naive Constraints} capture aggregate power and energy behavior, they do not ensure that element-level constraints are satisfied when the composite power trajectory is disaggregated to the individual elements. In particular, the composite model does not enforce element complementarity or prevent individual elements from violating their energy limits. To address this limitation, we introduce the notion of \emph{realizability}, which formalizes when a composite dispatch can be implemented by the underlying battery elements. Next, we derive sufficient conditions on the composite dispatch that guarantee realizability under heterogeneous battery parameters.

\begin{definition}[Realizable Composite Dispatch]
A composite dispatch sequence $\{P_\mathrm{c}[k], P_\mathrm{d}[k]\}_{k=0}^{K-1}$ is \emph{realizable} if there exist element-level trajectories $\{P_\mathrm{c}^i(l), P_\mathrm{d}^i(l)\}_{l=0}^{L-1}~\forall i=1,\ldots,N$ that satisfy the aggregation constraints in \eqref{eqn:Sum of Element} and the element constraints in \eqref{eqn:Elements Constraints} for all $i$ and $l$.
\end{definition}

The realizability of a composite dispatch depends on how composite power trajectories are disaggregated to individual battery elements. In the next section, we introduce a priority-based disaggregation policy and derive sufficient conditions under which a composite dispatch is guaranteed to be realizable for heterogeneous battery fleets.
\begin{figure}
    \centering
    \includegraphics[width=0.95\linewidth]{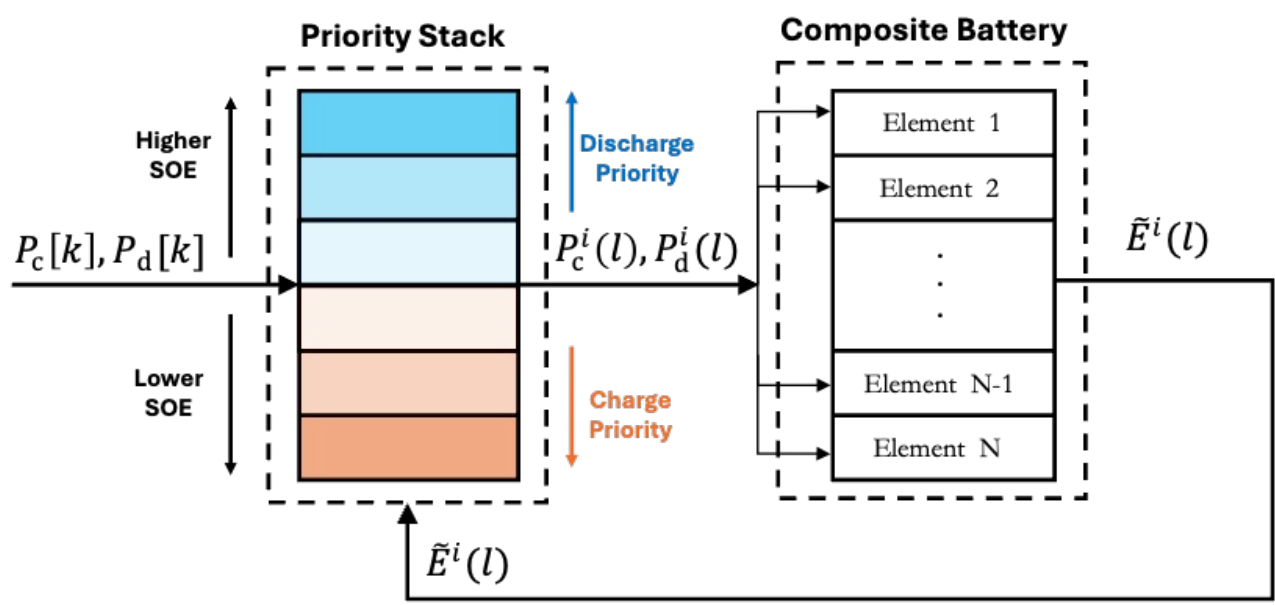}
    \caption{The composite battery, its elements, and the priority stack controller.}
    \label{fig:PSC}
\end{figure}
\section{Composite Trajectory Disaggregation}\label{sec: Composite Disagg}

This section introduces a priority-based disaggregation policy that maps composite charging and discharging power trajectories $\{P_\mathrm{c}[k], P_\mathrm{d}[k]\}_{k=0}^{K-1}$ to element-level commands $\{P_\mathrm{c}^i(l), P_\mathrm{d}^i(l)\}_{l=0}^{L-1}$ for each battery element $i=1,\ldots,N$. The proposed controller operates at the control time scale $l$, assigning charging and discharging priorities based on the elements’ SOEs, as illustrated in~Fig.~\ref{fig:PSC}.

Using this disaggregation policy, we derive sufficient conditions on the composite power and energy trajectories under which the resulting element-level dispatch is guaranteed to satisfy all element power limits, energy limits, and complementarity constraints.

\subsection{Priority Stack Controller (PSC)}
For every control time step $l$, the priority stack controller (PSC) sorts all $N$ battery elements according to their \emph{normalized state of energy (SOE)}. For each element $i$, the normalized SOE is defined as

\begin{align}\label{eqn:Element Normalized SOE}
    \tilde{E}^i(l) \triangleq \frac{E^i(l)}{E^i_{\max}},
\end{align}
which represents the fraction of usable energy stored in the element. Enforcing $\tilde{E}^i(l)\in[0,1]$ is equivalent to enforcing the element energy limits $0\le E^i(l)\le E^i_{\max}$.

Elements with lower normalized SOE are assigned higher charging priority, while elements with higher normalized SOE are assigned higher discharging priority. Similar priority-based control schemes have been proposed in the literature~\cite{karan_kalsi,equal_unequal,equal_unequal2}.

Given a composite charging or discharging command at scheduling time step $k$, the PSC activates a subset of elements according to this priority order, resulting in the $N_\mathrm{c}(l)$  elements with the lowest normalized SOE charging and $N_\mathrm{d}(l)$ elements with the highest normalized SOE discharging at control time step $l$. The PSC algorithm is summarized in Algorithm~\ref{alg:PSC} and implies Property~\ref{property:PSC}.

\begin{algorithm}
\caption{Priority Stack Controller (PSC)}
\label{alg:PSC}
\begin{algorithmic}[1]
{\small
    \State Given $P_\mathrm{c}[k]$ and $P_\mathrm{d}[k]$
    \For{each control time step $l$}
        \State Compute normalized SOE: $\tilde{E}^i(l)\gets E^i(l)/E^i_{\max}$
        \State Sort elements by $\tilde{E}^i(l)$ in ascending order
        \State Initialize $P_\mathrm{c}^i(l)\gets 0$, $P_\mathrm{d}^i(l)\gets 0$ for all $i$
        
        \State \textbf{Charging:} for $i=1,2,\ldots,N$ (lowest-$\tilde{E}$ first)
        \If{$\sum_j P_\mathrm{c}^j(l) < P_\mathrm{c}[k]$}
            \State $P_\mathrm{c}^i(l)\gets \min\!\left\{P_{\mathrm{c},\max}^i,\; P_\mathrm{c}[k]-\sum_j P_\mathrm{c}^j(l)\right\}$
        \EndIf
        
        \State \textbf{Discharging:} for $i=N,N-1,\ldots,1$ (highest-$\tilde{E}$ first)
        \If{$\sum_j P_\mathrm{d}^j(l) < P_\mathrm{d}[k]$}
            \State $P_\mathrm{d}^i(l)\gets \min\!\left\{P_{\mathrm{d},\max}^i,\; P_\mathrm{d}[k]-\sum_j P_\mathrm{d}^j(l)\right\}$
        \EndIf
        
        \State Update SOE: $E^i(l)\rightarrow E^i(l+1)$ using \eqref{eqn:Elements SOC}
    \EndFor
}
\end{algorithmic}
\end{algorithm}

\begin{property}[Ordering]\label{property:PSC}
    Under PSC, if $\tilde{E}^i(l) \ge \tilde{E}^j(l)$,  then $P^i_\text{c}(l)>0 \implies P^j_\text{c}(l) = P^j_\text{c,max}$ and $P^j_\text{d}(l)>0 \implies P^i_\text{d}(l) = P^i_\text{d,max}$.
\end{property}


To derive sufficient conditions for realizable composite dispatch, we impose a set of structural assumptions on the battery elements. 
\subsection{Assumptions for Realizability Analysis}
\label{subsec:assumptions}
The sufficient conditions for realizable composite dispatch derived in the
following subsection are obtained under the following assumptions:
\begin{enumerate}
   \item \textbf{Symmetric power limits:} Each element has equal maximum charging and discharging power, i.e., $P_{\mathrm{c},\max}^i = P_{\mathrm{d},\max}^i = P_\text{max}^i$ for all $i$.

    \item \textbf{Uniform power-to-energy ratios:} The ratio of maximum power to
    energy capacity is identical across all elements, i.e.,
    \begin{align}
        \frac{P_{\mathrm{c},\max}^i}{E_\text{max}^i} = \alpha_\mathrm{c} \quad \forall i, \qquad
        \frac{P_{\mathrm{d},\max}^i}{E_\text{max}^i} = \alpha_\mathrm{d} \quad \forall i.
    \end{align}

    \item \textbf{Homogeneous efficiencies:} All battery elements share identical charging and discharging efficiencies, i.e., $\eta_\mathrm{c}^i = \eta_\mathrm{c}$ and
    $\eta_\mathrm{d}^i = \eta_\mathrm{d}$ $~\forall i$.
\end{enumerate}


\begin{remark}[Practical Considerations]
In practical heterogeneous battery fleets, Assumption~2 is likely the most restrictive assumption since it imposes a uniform power-to-energy ratio, which generally ranges from 0.25 (4-hour battery) to 1.0 (1-hour battery) in practice. By contrast, Assumption~1 is practical since many commercial battery systems are designed with symmetric inverter ratings~\cite{tesla_2019}. More general energy storage systems, such as pumped hydro or fuel cells with electrolyzers may violate Assumption~1. Similarly, Assumption~3 is deemed mild for batteries as charging/discharging efficiencies generally vary within a narrow band (e.g., $92\%\pm 3\%$). For this reason, Section~\ref{sec:relaxing_P_E_ratios} focuses on effects of relaxing Assumption~2.
\end{remark}
\subsection{Sufficient Conditions for Realizable Dispatch}\label{subsec:sufficient_conditions}
Under the assumptions above, we derive sufficient conditions on the composite power and energy trajectories $P_\text{c}[k]$ $P_\text{d}[k]$, and $E[k]$ that guarantee the composite dispatch can be realized by the battery fleet under the PSC. By design, the PSC never commands charging or discharging power beyond individual element limits, so the remaining challenges are ensuring that (i) element energy limits are not violated over time and (ii) complementarity holds at the element level.

\subsubsection{Guaranteeing Element Complementarity Constraints}
To satisfy element-level complementarity constraints $(P_\text{c}^i(l)P_\text{d}^i(l)=0,\forall i,\forall l)$, the PSC must not assign charging and discharging commands to the same battery element at any control time step.  Any overlap between the $N_\text{c}(l)$ charging elements and the $N_\text{d}(l)$ discharging elements would violate complementarity, which implies the condition
\begin{equation}\label{eqn:NcNd}
    N_\text{c}(l) + N_\text{d}(l)\leq N, \quad \forall l.
\end{equation}

\begin{prop}[Sufficient Conditions for Complementarity] \label{prop: cutting plane}
Under the PSC, if $P_\text{c}[k]$ and $P_\text{d}[k]$ satisfy 
\begin{equation}
\label{eqn:new cut plane}
    P_\text{c}[k]+P_\text{d}[k]\leq \left(\sum_{i=1}^{N}P^i_\text{max}\right)- \max_{\forall i}\{P^i_\text{max}\},
\end{equation}
then the resulting $N_\text{c}(l)$ and $N_\text{d}(l)$ satisfy \eqref{eqn:NcNd}.
\end{prop}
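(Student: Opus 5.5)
We argue by contraposition: assume the PSC produces $N_\text{c}(l)+N_\text{d}(l)\ge N+1$ at some control step $l$, and show that the composite powers at that step must violate \eqref{eqn:new cut plane}. Throughout, $N_\text{c}(l)$ and $N_\text{d}(l)$ denote the numbers of elements receiving strictly positive charging and discharging power, respectively.

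\emph{Step 1: position structure.} Relabel the elements by their positions $1,\ldots,N$ in the single ascending sort of $\tilde{E}^i(l)$ used in Algorithm~\ref{alg:PSC}. Both passes use the same order, so the charging set is the prefix $\{1,\ldots,N_\text{c}(l)\}$ and the discharging set is the suffix $\{N-N_\text{d}(l)+1,\ldots,N\}$. The greedy rule $\min\{P^i_{\max},\,\text{residual}\}$ gives two facts:
\begin{itemize}
\item every charging element except possibly the last one, position $N_\text{c}(l)$, is saturated at $P^i_{\max}$ (this is Property~\ref{property:PSC});
\item the last charging element receives strictly positive power, so
\begin{equation*}
P_\text{c}[k] > \sum_{i=1}^{N_\text{c}(l)-1} P^i_{\max}.
\end{equation*}
\end{itemize}
By the same argument for the discharging pass (Assumption~1 gives $P^i_{\mathrm{c},\max}=P^i_{\mathrm{d},\max}=P^i_{\max}$),
\begin{equation*}
P_\text{d}[k] > \sum_{i=N-N_\text{d}(l)+2}^{N} P^i_{\max}.
\end{equation*}

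\emph{Step 2: covering argument.} Consider the index sets $C'=\{1,\ldots,N_\text{c}(l)-1\}$ and $D'=\{N-N_\text{d}(l)+2,\ldots,N\}$. Under $N_\text{c}(l)+N_\text{d}(l)\ge N+1$, we have $N-N_\text{d}(l)+2\le N_\text{c}(l)+1$. Hence $C'\cup D'$ contains every position except possibly the single position $N_\text{c}(l)$, which is missed only when $N_\text{c}(l)+N_\text{d}(l)=N+1$.

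If the sets overlap, indices are counted twice; since all $P^i_{\max}\ge 0$, this only enlarges the sum. Adding the two strict inequalities from Step 1 therefore gives
\begin{equation*}
P_\text{c}[k]+P_\text{d}[k] > \sum_{i=1}^N P^i_{\max} - P^{j}_{\max} \ge \sum_{i=1}^N P^i_{\max} - \max_i P^i_{\max},
\end{equation*}
where $j$ is the (possibly absent) uncovered position. This contradicts \eqref{eqn:new cut plane}, so \eqref{eqn:NcNd} holds for every $l$.

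\emph{Main obstacle.} The delicate part is the bookkeeping at the boundary, which reduces to three cases.
\begin{itemize}
\item \emph{Strictness.} We need a strict inequality in Step 1, which holds precisely because $N_\text{c}(l)$ counts only elements with positive power. If the command is exactly a sum of saturated powers, the greedy pass stops and no further element is counted.
\item \emph{Degenerate counts.} The cases $N_\text{c}(l)=0$ or $N_\text{d}(l)=0$ satisfy \eqref{eqn:NcNd} trivially.
\item \emph{Ties in $\tilde{E}$.} Ties are harmless because a single fixed sort order is used for both passes, so the prefix/suffix structure of Step 1 still holds.
\end{itemize}
We would verify these three cases explicitly. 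The remaining steps are the routine index arithmetic above.
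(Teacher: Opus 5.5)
Your proposal is correct and is essentially the paper's proof. Both take the strict inequalities $P_\text{c}[k]>\sum_{i\in C\setminus\{i_\text{c}\}}P^i_{\max}$ and $P_\text{d}[k]>\sum_{i\in D\setminus\{i_\text{d}\}}P^i_{\max}$ from the two marginal elements and add them; your exceptional case $N_\text{c}(l)+N_\text{d}(l)=N+1$ is exactly the paper's case $i_\text{c}=i_\text{d}$, and your explicit prefix/suffix bookkeeping justifies what the paper leaves implicit, namely that $C\cup D$ is the whole fleet and both marginal elements lie in $C\cap D$.
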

\begin{proof}
    See Appendix~\ref{app:prop_1_proof}.
\end{proof}

Proposition~\ref{prop: cutting plane} introduces an additional linear constraint \eqref{eqn:new cut plane} on the composite charging and discharging powers that guarantees element-level complementarity under the PSC. Fig.~\ref{fig:feasible_regions} illustrates the resulting composite feasible regions defined by \eqref{eqn:NcNd} and \eqref{eqn:new cut plane}.

\begin{figure}[ht]
\centering
\includegraphics[width=.475\columnwidth]{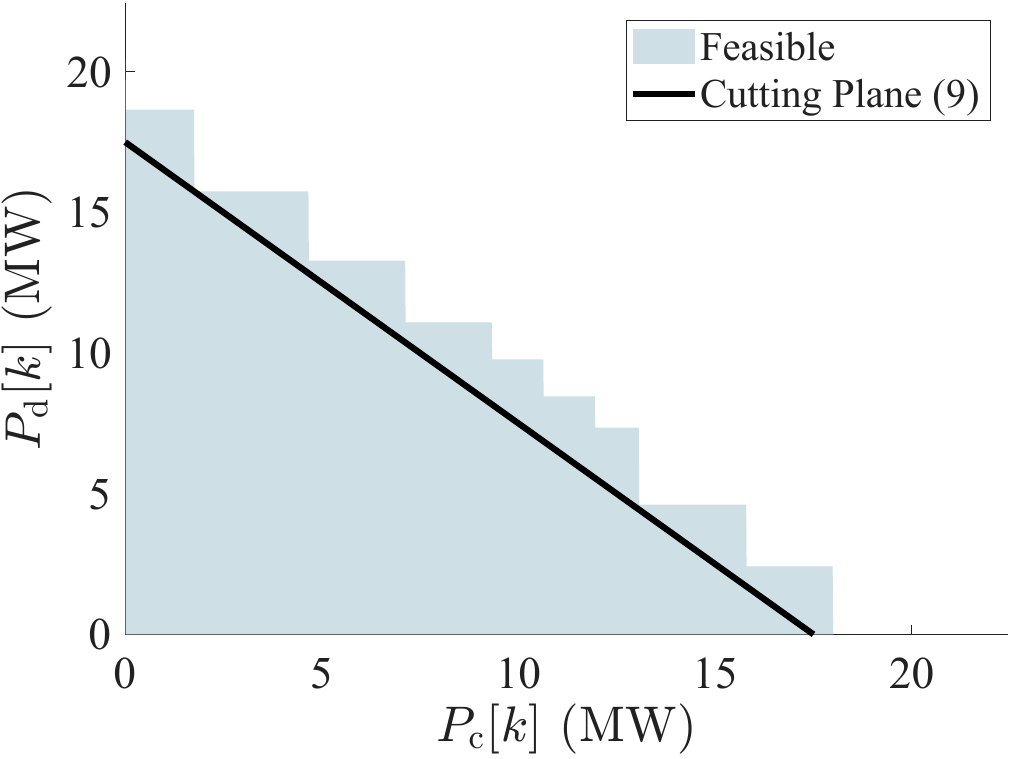}~
\includegraphics[width=.475\columnwidth]{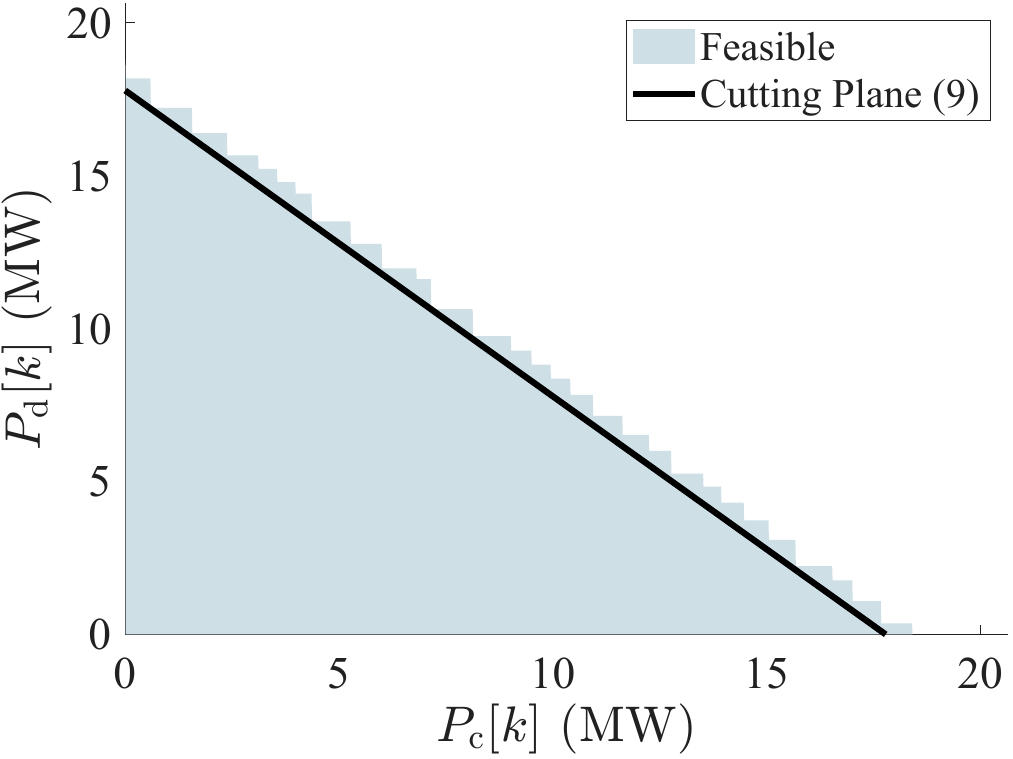}
\caption{Feasible region for $N=10$ and $P^i_\text{max}\in[1,3]$ MW (left) and $N=30$ and $P^i_\text{max}\in[0.33,1]$MW (right). SOE limits are omitted here.}
\label{fig:feasible_regions}
\end{figure}

\subsubsection{Guaranteeing Element Energy Limits}
Under the priority stack controller, battery elements are charged and discharged individually at each control time step, resulting in heterogeneous trajectories of normalized SOE $\tilde{E}^i(l)$ across the fleet. At the composite level, we define the normalized composite SOE as
\begin{align}\label{eqn:Composite Normailzed SOE}
    \tilde{E}(l)
    \triangleq
    \frac{\sum_i E^i(l)}{\sum_i E_\text{max}^i}
    =
    \frac{E(l)}{\sum_i E_\text{max}^i},
\end{align}
where $E(l)$ is the aggregate energy at control step $l$. While enforcing $\tilde{E}(l)\in[0,1]$ ensures that the composite battery remains within its aggregate energy limits, it does not guarantee that all individual elements satisfy $\tilde{E}^i(l)\in[0,1]$. In particular, under the PSC, normalized element SOEs can spread around the composite value.

From the definitions of normalized element and composite SOE, it follows that
\begin{align}\label{eqn:Norm SOE order}
    \tilde{E}^{\min}(l) \le \tilde{E}(l) \le \tilde{E}^{\max}(l),
\end{align}
where $\tilde{E}^{\min}(l)$ and $\tilde{E}^{\max}(l)$ denote the minimum and
maximum normalized element SOEs at time step $l$, respectively. This difference between element and composite normalized SOE can cause violations of element energy limits even with $\tilde{E}(l)\in[0,1]$.

To formalize this notion, we define the maximum normalized SOE spread as
\begin{align}
    \Delta \tilde{E}(l)
    \triangleq
    \max_{i,j}\left\{ \left| \tilde{E}^i(l)-\tilde{E}^j(l) \right| \right\}.
\end{align}

\begin{prop}[Bounding SOE spread]\label{prop: SOE difference bound}
Under the PSC,
\begin{align}
    \Delta \tilde{E}(l+1)
    \le
    \max\left\{ \Delta \tilde{E}(l), \tilde{\varepsilon} \right\},
    \quad \forall l,
\end{align}
where
\begin{equation}\label{eq:DEmax-def}
    \tilde{\varepsilon}
    :=
    \delta t\left(\eta_\text{c}\alpha_\text{c}
    + \frac{\alpha_\text{d}}{\eta_\text{d}}\right).
\end{equation}
\end{prop}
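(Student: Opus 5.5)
The plan is to reduce the statement to a pairwise estimate and then take the maximum over pairs. Fix a control step $l$ and two elements $i,j$, labeled so that $\tilde{E}^i(l)\ge\tilde{E}^j(l)$. Set $d\triangleq\tilde{E}^i(l)-\tilde{E}^j(l)\in[0,\Delta\tilde{E}(l)]$. It suffices to show
\[
\bigl|\tilde{E}^i(l+1)-\tilde{E}^j(l+1)\bigr|\le\max\{d,\tilde{\varepsilon}\},
\]
since maximizing over all pairs then gives the claim.

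First I would rewrite the dynamics in normalized form. Dividing \eqref{eqn:Elements SOC} by $E^i_{\max}$ and using Assumption~3 gives $\tilde{E}^i(l+1)=\tilde{E}^i(l)+\Delta_i$, where
\[
\Delta_i=c_i-d_i,\qquad c_i=\delta t\,\eta_\text{c}\frac{P^i_\text{c}(l)}{E^i_{\max}},\qquad d_i=\frac{\delta t}{\eta_\text{d}}\frac{P^i_\text{d}(l)}{E^i_{\max}}.
\]
Because the PSC never exceeds the element power limits, Assumption~2 yields $0\le c_i\le b\triangleq\delta t\,\eta_\text{c}\alpha_\text{c}$ and $0\le d_i\le a\triangleq\delta t\,\alpha_\text{d}/\eta_\text{d}$. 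Hence $\Delta_i\in[-a,b]$ and $a+b=\tilde{\varepsilon}$. The key point is that these bounds, and in particular the saturated values $b$ and $a$, are the same for every element. This is exactly what the uniform power-to-energy ratio buys.

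The central step is to show $\Delta_i\le\Delta_j$, i.e., the element with the higher normalized SOE never gains more normalized energy than the lower one. This follows from Property~\ref{property:PSC}. If $c_i>0$, then $P^j_\text{c}(l)=P^j_\text{c,max}$, so $c_j=b\ge c_i$; if $c_i=0$, then trivially $c_i\le c_j$. Likewise, if $d_j>0$, then $P^i_\text{d}(l)=P^i_\text{d,max}$, so $d_i=a\ge d_j$; otherwise $d_j=0\le d_i$. Combining the two gives
\[
\Delta_i-\Delta_j=(c_i-c_j)+(d_j-d_i)\le 0.
\]
I expect this to be the main obstacle. One has to make sure the comparison is made in normalized units, so that "at maximum power" means the same normalized increment $b$ (or decrement $a$) for both elements. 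Without Assumption~2 this step fails, because a saturated element with a small power-to-energy ratio could still gain less than an unsaturated one.

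Finally, write $d'\triangleq\tilde{E}^i(l+1)-\tilde{E}^j(l+1)=d+\Delta_i-\Delta_j$. The central step gives $d'\le d$. For the lower bound, $\Delta_j-\Delta_i\le b+a=\tilde{\varepsilon}$, so $d'\ge d-\tilde{\varepsilon}\ge-\tilde{\varepsilon}$. Hence $|d'|\le\max\{d,\tilde{\varepsilon}\}\le\max\{\Delta\tilde{E}(l),\tilde{\varepsilon}\}$, and maximizing over all pairs $(i,j)$ proves the proposition. The ordering of a pair may flip in one step, but only by at most $\tilde{\varepsilon}$, which is exactly the second term in the bound. The argument uses only the PSC ordering and the power limits; complementarity is not needed.
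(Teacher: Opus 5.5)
Your proposal is correct and follows essentially the same route as the paper's proof. It makes a pairwise comparison with $\tilde{E}^i(l)\ge\tilde{E}^j(l)$, uses Property~\ref{property:PSC} with the uniform power-to-energy ratio to show the gap cannot grow, bounds a possible flip by the per-step normalized increment range $\tilde{\varepsilon}$ (using $\tilde{E}^i(l)-\tilde{E}^j(l)\ge 0$), and then maximizes over pairs. Your write-up spells out the step the paper compresses into ``using Property~1 and Assumptions~1--3'' (the normalized comparison $\Delta_i\le\Delta_j$), and your argument shows that neither Assumption~1 nor complementarity is actually needed.
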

\begin{proof}
    See Appendix~\ref{app:proof_prop2}.
\end{proof}


\begin{prop}[Admissible element energy limits]\label{prop:SOC new limits}
Under the PSC, a composite dispatch that satisfyies
\begin{equation}\label{eqn:Composite_Energy_Limits}
    \tilde{\varepsilon} \le \tilde{E}[k] \le 1-\tilde{\varepsilon}
\end{equation}
is disaggregated into element-level trajectories that respect all individual element energy limits,
provided that (i) $\max_{i,j}\{|\tilde{E}^i_0-\tilde{E}^j_0|\}\le \tilde{\varepsilon}$ and (ii) the control time step $\delta t=\Delta t/M$ is sufficiently small to ensure
$\tilde{\varepsilon}\le \tfrac{1}{2}$.
\end{prop}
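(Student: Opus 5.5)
The plan is to combine Proposition~\ref{prop: SOE difference bound}, applied inductively, with the sandwich inequality \eqref{eqn:Norm SOE order}. Together these confine every element's normalized SOE to a band of half-width $\tilde{\varepsilon}$ around the composite normalized SOE. The composite constraint \eqref{eqn:Composite_Energy_Limits} then keeps that band inside $[0,1]$.

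\textbf{Step 1 (uniform spread bound).} I would show by induction on $l$ that $\Delta\tilde{E}(l)\le\tilde{\varepsilon}$ for all $l=0,\ldots,L$. The base case is hypothesis (i), since $\Delta\tilde{E}(0)=\max_{i,j}|\tilde{E}^i_0-\tilde{E}^j_0|\le\tilde{\varepsilon}$. For the inductive step, Proposition~\ref{prop: SOE difference bound} gives $\Delta\tilde{E}(l+1)\le\max\{\Delta\tilde{E}(l),\tilde{\varepsilon}\}\le\tilde{\varepsilon}$.

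\textbf{Step 2 (from the composite to the elements).} By \eqref{eqn:Norm SOE order}, $\tilde{E}^{\min}(l)\le\tilde{E}(l)\le\tilde{E}^{\max}(l)$. Since $\tilde{E}^{\max}(l)-\tilde{E}^{\min}(l)=\Delta\tilde{E}(l)\le\tilde{\varepsilon}$, every element satisfies
\begin{equation*}
\tilde{E}(l)-\tilde{\varepsilon}\le\tilde{E}^{\min}(l)\le\tilde{E}^i(l)\le\tilde{E}^{\max}(l)\le\tilde{E}(l)+\tilde{\varepsilon}.
\end{equation*}
It therefore suffices to show $\tilde{\varepsilon}\le\tilde{E}(l)\le 1-\tilde{\varepsilon}$ at every control step $l$.

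\textbf{Step 3 (control steps between scheduling steps).} This is the step needing care, because \eqref{eqn:Composite_Energy_Limits} is imposed only at the scheduling instants $l=Mk$. Within scheduling interval $k$ the PSC meets the aggregate setpoints exactly, by \eqref{eqn:Sum of Element}. This assumes the composite powers respect the aggregate power limits, so that the PSC can fill them. Summing \eqref{eqn:Elements SOC} over $i$ under the homogeneous efficiencies of Assumption~3 then gives constant per-step increments in aggregate energy. Hence for $l=Mk+m$ with $0\le m\le M$,
\begin{equation*}
\tilde{E}(l)=\Big(1-\tfrac{m}{M}\Big)\tilde{E}[k]+\tfrac{m}{M}\,\tilde{E}[k+1].
\end{equation*}
So $\tilde{E}(l)$ is a convex combination of two values in $[\tilde{\varepsilon},1-\tilde{\varepsilon}]$, and it lies in that interval too.

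Combining Steps 2 and 3 gives $0\le\tilde{E}^i(l)\le 1$, which is equivalent to $0\le E^i(l)\le E^i_{\max}$ for all $i$ and $l$. Hypothesis (ii), $\tilde{\varepsilon}\le\tfrac12$, guarantees that the admissible interval $[\tilde{\varepsilon},1-\tilde{\varepsilon}]$ is nonempty, so the constraint is not vacuous.

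I expect the main obstacle to be making sure the hypotheses of Proposition~\ref{prop: SOE difference bound} actually hold along the whole trajectory, so that the induction does not become circular. In particular, the PSC must be able to deliver the commanded powers at each step, and the spread bound must not itself rely on energy limits already being respected. If that proposition's proof does use feasibility at step $l$, I would strengthen the induction hypothesis. It would then carry both $\Delta\tilde{E}(l)\le\tilde{\varepsilon}$ and $\tilde{E}^i(l)\in[0,1]$ jointly, establishing Steps 1--3 together at each $l$.
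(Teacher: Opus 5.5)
Your proof is correct and is essentially the paper's argument. The paper also combines Proposition~2 with the sandwich $\tilde{E}^{\min}(l)\le\tilde{E}(l)\le\tilde{E}^{\max}(l)$ to place every $\tilde{E}^i(l)$ within $\tilde{\varepsilon}$ of $\tilde{E}(l)$, leaving implicit the induction from hypothesis (i) that your Step~1 spells out.

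Your Step~3 is in fact an improvement. The paper's closing sentence only passes from control steps to the scheduling instants $l=Mk$, so it never justifies the buffer at intermediate control steps; your linear-interpolation argument supplies exactly that. Your worry about circularity is unnecessary: the PSC ignores energy limits, and Proposition~2 uses only Property~1 and Assumptions~1--3, so the plain induction suffices.
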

\begin{proof}
    See Appendix~\ref{app:proof_prop3}.
\end{proof}
 By maintaining a buffer of size $\tilde{\varepsilon}$ between
the composite SOE trajectory and the aggregate energy limits, sufficient
headroom is preserved to accommodate the maximum SOE spread induced by the PSC,
ensuring that no individual element reaches its energy limits.

\subsection{Composite Battery Model for Heterogeneous Fleets}
\label{subsec:composite_theorem}
We now summarize the proposed aggregate model for a heterogeneous battery fleet and state the conditions under which it admits realizable dispatch. The result parallels the composite battery formulation developed for homogeneous fleets in~\cite{Composite_Homogeneous}, while explicitly accounting for heterogeneity in power and energy ratings.

\begin{theorem}[Composite Battery Model for Heterogeneous Fleets] \label{thm:heterogeneous_composite}
Under Assumptions~1--3 in Section~\ref{subsec:assumptions}, any composite dispatch sequence $\{P_\mathrm{c}[k],P_\mathrm{d}[k]\}_{k=0}^{K-1}$ that satisfies

\begin{subequations} \label{eqn:Composite_Model_Theorem}
\begin{align}
    E[k+1]&=E[k]+ \Delta t\eta_\text{c}P_\text{c}[k] -\Delta t\frac{1}{\eta_\text{d}}P_\text{d}[k] \label{eqn:Thrm_Composite SOE Update} \\
    E[0] &= \sum_{i}E^i_0\label{eqn:Thrm_Composite SOE IC} \\
    0\le &P_\text{c}[k]\le \sum_i P_\text{max}^i \label{eqn:Thrm_Pc_limit}\\
    0\le &P_\text{d}[k]\le \sum_i P_\text{max}^i\label{eqn:Thrm_Pd_limit}\\
    P_\text{c}[k]&+P_\text{d}[k]\le \sum_i P_\text{max}^i -\max_i\{P_\text{max}^i\}\label{eqn:Thrm_new_cut_plane}\\ 
    \tilde{\varepsilon}\sum_i &E_\text{max}^i\le E[k]\le (1-\tilde{\varepsilon})\sum_i E_\text{max}^i \label{eqn:Thrm_energy}
\end{align}
\end{subequations}
where $\tilde{\varepsilon}$ is defined in \eqref{eq:DEmax-def}, is realizable
under the priority stack controller provided that $\max_{i,j}\{|\tilde{E}^i_0-\tilde{E}^j_0|\}\le \tilde{\varepsilon}$ and $\tilde{\varepsilon}<\frac{1}{2}$.
\end{theorem}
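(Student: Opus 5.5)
The plan is to run the priority stack controller of Algorithm~\ref{alg:PSC} on the given composite dispatch and verify, one by one, the element constraints in \eqref{eqn:Elements Constraints} together with the aggregation identity \eqref{eqn:Sum of Element}. Realizability then follows from the definition, because the PSC itself supplies the required element-level trajectories. The check splits into three parts.

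\textbf{Step 1: aggregation and power limits.} Constraints \eqref{eqn:Thrm_Pc_limit}--\eqref{eqn:Thrm_Pd_limit} bound $P_\mathrm{c}[k]$ and $P_\mathrm{d}[k]$ by $\sum_i P^i_\mathrm{max}$. The greedy loops of the PSC therefore always exhaust the command, so \eqref{eqn:Sum of Element} holds with equality at every $l$ with $k=\floor{l/M}$. By construction, each element receives at most $P^i_\mathrm{max}$, which gives \eqref{eqn:Elements Pc Limit}--\eqref{eqn:Elements Pd Limit}. Assumption~1 lets a single $P^i_\mathrm{max}$ serve both directions.

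\textbf{Step 2: complementarity.} Constraint \eqref{eqn:Thrm_new_cut_plane} is exactly \eqref{eqn:new cut plane}, so Proposition~\ref{prop: cutting plane} yields \eqref{eqn:NcNd}. The charging set consists of the $N_\mathrm{c}(l)$ lowest-$\tilde E$ elements and the discharging set of the $N_\mathrm{d}(l)$ highest-$\tilde E$ elements in one common sorted order. Since the two counts sum to at most $N$, the sets are disjoint, so $P^i_\mathrm{c}(l)P^i_\mathrm{d}(l)=0$ and \eqref{eqn:Elements CC} holds.

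\textbf{Step 3: energy limits.} Assumption~3 makes the composite dynamics \eqref{eqn:Thrm_Composite SOE Update}--\eqref{eqn:Thrm_Composite SOE IC} agree with the element sums via \eqref{eqn:SOE mapping}. Hence $E[k]$ is the true aggregate energy at $l=Mk$, and \eqref{eqn:Thrm_energy} becomes \eqref{eqn:Composite_Energy_Limits} after dividing by $\sum_i E^i_\mathrm{max}$. The hypotheses $\max_{i,j}|\tilde E^i_0-\tilde E^j_0|\le\tilde\varepsilon$ and $\tilde\varepsilon<\tfrac12$ are precisely conditions (i)--(ii) of Proposition~\ref{prop:SOC new limits}. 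Invoking that proposition gives $0\le E^i(l)\le E^i_\mathrm{max}$ for all $i$ and $l$, which is \eqref{eqn:Elements SOC Limit}.

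\textbf{Main obstacle.} The delicate point is the control steps strictly inside a scheduling interval, since \eqref{eqn:Thrm_energy} only constrains $E[k]$ at the endpoints $l=Mk$. Within an interval the power is constant, so $\tilde E(l)$ is affine in $l$ between $\tilde E[k]$ and $\tilde E[k+1]$. Both endpoints lie in $[\tilde\varepsilon,1-\tilde\varepsilon]$, so convexity keeps every intermediate $\tilde E(l)$ in that range too.

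Proposition~\ref{prop: SOE difference bound}, iterated from the initial spread, then gives $\Delta\tilde E(l)\le\tilde\varepsilon$ for all $l$. Combined with \eqref{eqn:Norm SOE order}, this yields $\tilde E^i(l)\in[\tilde E(l)-\tilde\varepsilon,\tilde E(l)+\tilde\varepsilon]\subseteq[0,1]$. I expect this argument is already what Proposition~\ref{prop:SOC new limits} packages. If that proposition is read as covering only scheduling steps, I would add the interpolation argument explicitly.
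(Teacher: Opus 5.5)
Your proposal is correct and follows the paper's proof essentially step for step: the PSC design plus the aggregate power bounds give the element power limits, Proposition~\ref{prop: cutting plane} gives complementarity, and Propositions~\ref{prop: SOE difference bound}--\ref{prop:SOC new limits} together with the initial-spread hypothesis give the energy limits. Two parts of your write-up go beyond the paper and are worth keeping: the explicit check that the PSC exhausts the command (so \eqref{eqn:Sum of Element} holds and $E[k]=\sum_i E^i(Mk)$), and the affine-interpolation argument for control steps strictly inside a scheduling interval, which the paper's proof of Proposition~\ref{prop:SOC new limits} does not supply because it states its conclusion only at the scheduling instants $l=Mk$ (on the last interval this interpolation needs \eqref{eqn:Thrm_energy} imposed at $k=K$ as well).
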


\begin{proof}
See Appendix~\ref{appx_Thrm_proof}.
\end{proof}


\begin{remark}[Practical Implementation Considerations]  The realizability guarantees derived in Section~III are sufficient and rely on structural assumptions: symmetric charge/discharge power limits, uniform power‑to‑energy ratios, and homogeneous efficiencies across elements. We also assume a small initial normalized SOE spread, so that PSC-induced SOE dispersion remains bounded. Finally, because the PSC allocates power using a priority order, the resulting composite feasible set is conservatively tightened (via constraint \eqref{eqn:Thrm_new_cut_plane} and the SOE buffer \eqref{eqn:Thrm_energy}, which can reduce usable flexibility at coarse control time steps. However, this conservatism decreases with finer control time steps $\delta t$ (smaller SOE buffer $\tilde{\varepsilon}$) and with larger battery populations, for which $\tfrac{P^i_\text{max}}{\sum_i P^i_\text{max}} \rightarrow 0$.
\end{remark}


Next, we validate the proposed model through numerical case studies, examining scalability with fleet size and robustness to relaxing the assumptions made.

\section{Numerical Results}

This section evaluates the proposed realizable composite battery (RCB) model through numerical experiments focused on feasibility, solution optimality, and solve time. A unit-commitment (UC) problem is used as a practical example of a setting in which aggregate battery models are  integrated with system-level optimization. We consider a unit-commitment-based battery optimization use case from~\cite{arroyo_ensuring_2022}. The RCB model is compared against a full micro-model benchmark, in which each battery element is modeled individually using a mixed-integer linear program (MILP), and against a relaxed micro-model that does not explicitly enforce charging complementarity constraint. We then examine how these formulations perform as the fleet size increases, highlighting the scalability advantages of the composite approach. Finally, we relax the uniform power-to-energy ratio assumption and assess the resulting tradeoffs between feasibility, performance, and computational effort. 

\subsection{Unit Commitment Use Case}\label{subsec:UC_res_yes_assumptions}
The unit-commitment example introduced in~\cite{arroyo_ensuring_2022} considers a small system composed of two thermal generators and a battery energy storage system scheduled over two time periods. The generators are subject to binary on/off decisions, generation limits, and minimum up-time constraints, while the battery is operated using charging and discharging power variables and energy balance constraints. Following~\cite{arroyo_ensuring_2022}, linear generation costs and quadratic battery net-charging costs are considered, and system-wide power balance is enforced at each time step. The complete mathematical formulation of the unit commitment problem is provided in the Appendix~\ref{appx_UCproblem}. 

The battery energy storage system is modeled as an aggregation of $N$ smaller battery units. For each fleet size, the total battery energy capacity is set to 53.25~MWh, matching~\cite{arroyo_ensuring_2022}, and is distributed across the $N$ elements. Individual element energy limits are drawn around a mean value of $53.25/N$ with a $\pm20\%$ variation. The resulting values are then rescaled so that the total energy capacity of the fleet remains constant. A constant power-to-energy ratio of 0.5 is used for all elements.  Charging and discharging efficiencies $\eta_c$ and $\eta_d$ are held constant across the population.


To compare realizability, solution quality, and scalability, the unit-commitment problem is solved using three storage modeling approaches: (i) a full micro-model MILP with explicit complementarity constraint enforced at the element level, (ii) a relaxed micro-model that omits the complementarity constraint, and (iii) the proposed RCB model in~\eqref{eqn:Composite_Model_Theorem} coupled with the priority stack controller. The resulting solutions are implemented at the element-level to check for any constraint violations. This comparison is repeated for fleet sizes $N\in\{10,100,1000\}$. For the RCB model, the control time step is varied as $\delta t\in\{10~\mathrm{min},5~\mathrm{min},1~\mathrm{min}\}$ to study the effect of control granularity on solution quality. The results for the different models are shown in Table~\ref{tab:SimRes}. 
\begin{table}[h]
    \caption{Unit-commitment results}
    \begin{center}
    \setlength\tabcolsep{8pt}
     \centering
     \label{tab:SimRes}
    \begin{tabular}{lrrr}
    \toprule
    \textbf{Model}    & \textbf{Solve Time (s)} & \textbf{Cost (\$)} & \textbf{Realizable}\\ \midrule
    \noalign{\vskip -\aboverulesep}
    \rowcolor{gray!20}
        \multicolumn{4}{c}{\textbf{N=10}} \\
        \noalign{\vskip -\belowrulesep}
        \midrule
    MILP & 2.01 & 152.5  & \checkmark \\ 
    Relaxed & 1.95 & 152.5 & $\times$ \\ 
        \midrule
    RCB ($\delta t=$ 10 min) & 2.23  &  166.4 & \checkmark  \\ 
    RCB ($\delta t=$ 5 min) & 2.20 & 154.1  &  \checkmark\\ 
    RCB ($\delta t=$ 1 min) & 2.27 & 152.5 & \checkmark \\
    \midrule
    \noalign{\vskip -\aboverulesep}
    \rowcolor{gray!20}
        \multicolumn{4}{c}{\textbf{N=100}} \\
        \noalign{\vskip -\belowrulesep}
        \midrule
    MILP & 2.01 & 152.5 & \checkmark \\ 
    Relaxed & 2.05 & 152.5  & $\times$  \\ 
        \midrule
    RCB ($\delta t=$ 10 min)& 2.19   &  166.4 & \checkmark \\ 
    RCB ($\delta t=$ 5 min) & 2.24 & 154.1 & \checkmark \\ 
    RCB ($\delta t=$ 1 min)  & 2.26 & 152.5 & \checkmark \\ 
    \midrule
    \noalign{\vskip -\aboverulesep}
    \rowcolor{gray!20}
        \multicolumn{4}{c}{\textbf{N=1000}} \\
    \noalign{\vskip -\belowrulesep}
        \midrule
    MILP & 51.8  & 152.5 & \checkmark \\ 
    Relaxed & 49.7 & 152.5 &  $\times$\\ 
        \midrule
    RCB ($\delta t=$ 10 min)& 1.96 & 166.4  & \checkmark \\ 
    RCB ($\delta t=$ 5 min) & 2.81 & 154.1 &\checkmark  \\ 
    RCB ($\delta t=$ 1 min)  &  1.96& 152.5 &\checkmark  \\ 
    \bottomrule
    \end{tabular}
    \end{center}
    \end{table}
   

Table~\ref{tab:SimRes} highlights three key points. First, the relaxed micro-model attains the same objective value as the MILP but produces non-realizable schedules, confirming that ignoring element-level complementarity can lead to infeasible dispatch. Second, the proposed RCB formulation is realizable in all cases and exhibits solve times that are effectively independent of fleet size, in contrast to the rapidly increasing solve time of the MILP as $N$ grows. Third, for coarse control granularity ($\delta t=10$ min), the RCB model is conservative and yields higher cost, as the tightened realizability constraints \eqref{eqn:Thrm_new_cut_plane}--\eqref{eqn:Thrm_energy} limit usable battery flexibility. As $\delta t$ decreases, this conservatism diminishes and the RCB objective value converges to the MILP benchmark across all fleet sizes while maintaining near-constant solve time.


The 2-hour UC instance considered above is small enough that the MILP can still be solved within reasonable time despite being orders of magnitude slower than the RCB. However, to further assess scalability under a more realistic horizon, we also solved the UC problem over a 24-hour horizon ($K=24$) while varying the fleet size. Fig.~\ref{fig:UC_24h_solve_time} reports solve time versus $N$ for the MILP and RCB formulations. The MILP solve time increases sharply with the fleet size. In contrast, the RCB solve time remains nearly constant and comparable to the 2-hour case highlighting the computational advantage of the composite formulation.

\begin{figure}[h]
    \centering
    \includegraphics[width=0.9\linewidth]{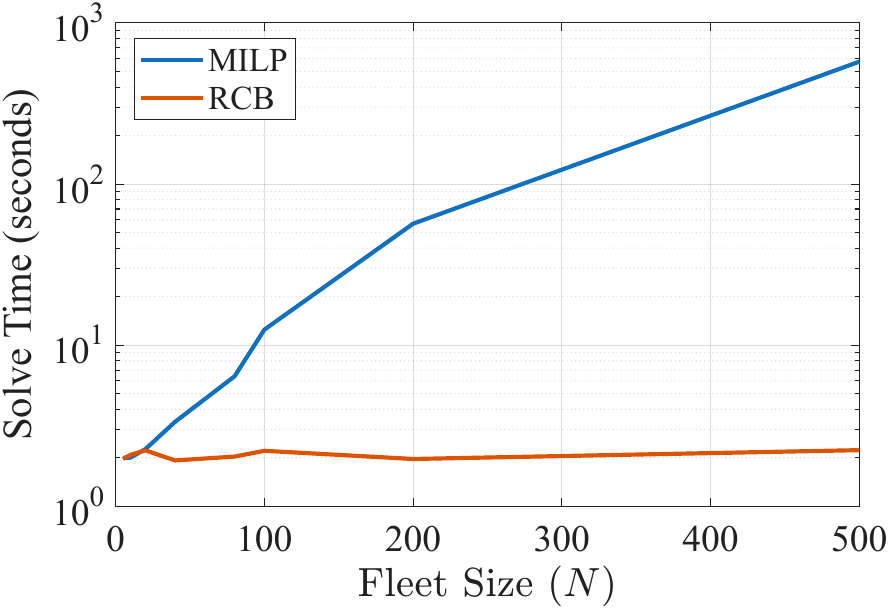}
    \caption{Solve time versus fleet size for the 24-hour UC problem.}
    \label{fig:UC_24h_solve_time}
\end{figure}


The results in Section~\ref{subsec:UC_res_yes_assumptions} are obtained under the modeling assumptions introduced in Section~\ref{subsec:assumptions}. In particular, the uniform power-to-energy ratio assumption is often the most limiting in heterogeneous fleets. For this reason, we next focus on relaxing Assumption~2 and quantifying the resulting tradeoff between realizability conservatism and performance.

\subsection{Relaxing Assumption 2: Power-to-Energy Ratios} \label{sec:relaxing_P_E_ratios}

The realizability conditions in Section~III rely on a uniform power-to-energy ratio across battery elements. When this condition is not satisfied, it can be artificially enforced by reducing either the $P^i_\text{max}$ or $E^i_\text{max}$ of each element to maintain a common ratio $P^i_\text{max}/E^i_\text{max}=\alpha$, as illustrated in Fig.~\ref{fig:PE_limiting}.
In the numerical experiments, the individual power-to-energy ratios are sampled uniformly from the interval $[0.25,1]$, corresponding to 4-hour and 1-hour batteries, respectively. The limiting procedure described above is then applied to enforce the uniform ratio required by the realizability conditions.
\begin{figure}
    \centering
    \includegraphics[width=0.9\linewidth]{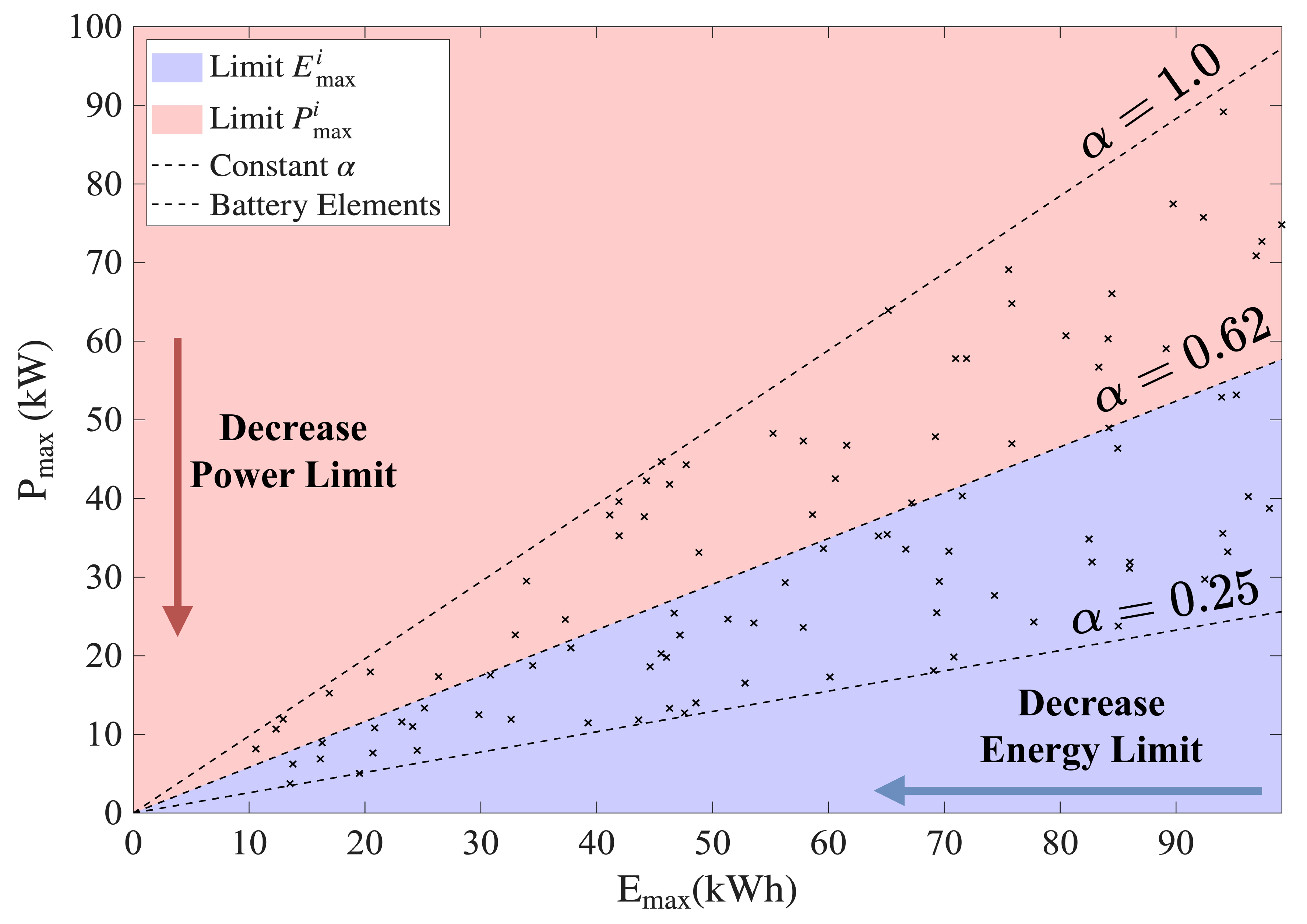}
    \caption{Illustration of enforcing a uniform power-to-energy ratio by limiting either the maximum power or usable energy of individual battery elements to match a target ratio $\alpha$.}
    \label{fig:PE_limiting}
\end{figure}

Imposing a uniform power-to-energy ratio through this limiting procedure  reduces the available power and/or energy of the battery fleet. The amount of usable power and energy depends on the chosen value of $\alpha$, leading to different levels of conservatism in the composite model. To quantify this effect, we sweep $\alpha$ over a range of values and evaluate the resulting impact on performance. Specifically, we parameterize the sweep using a scalar $\theta \in [0,1]$ and define 
\begin{equation}
\alpha(\theta) = \theta\,\max_i\!\left\{\tfrac{P^i_{\max}}{E^i_{\max}}\right\} + (1-\theta)\,\min_i\!\left\{\tfrac{P^i_{\max}}{E^i_{\max}}\right\},
\end{equation}
where $\theta=1$ enforces the largest power-to-energy ratio in the population and $\theta=0$ enforces the smallest, thereby sweeping between the least and most conservative limiting choices. Fig.~\ref{fig:alpha_sweep} shows the resulting effect on performance as $\theta$ is varied.

\begin{figure}
    \centering
    \includegraphics[width=0.9\linewidth]{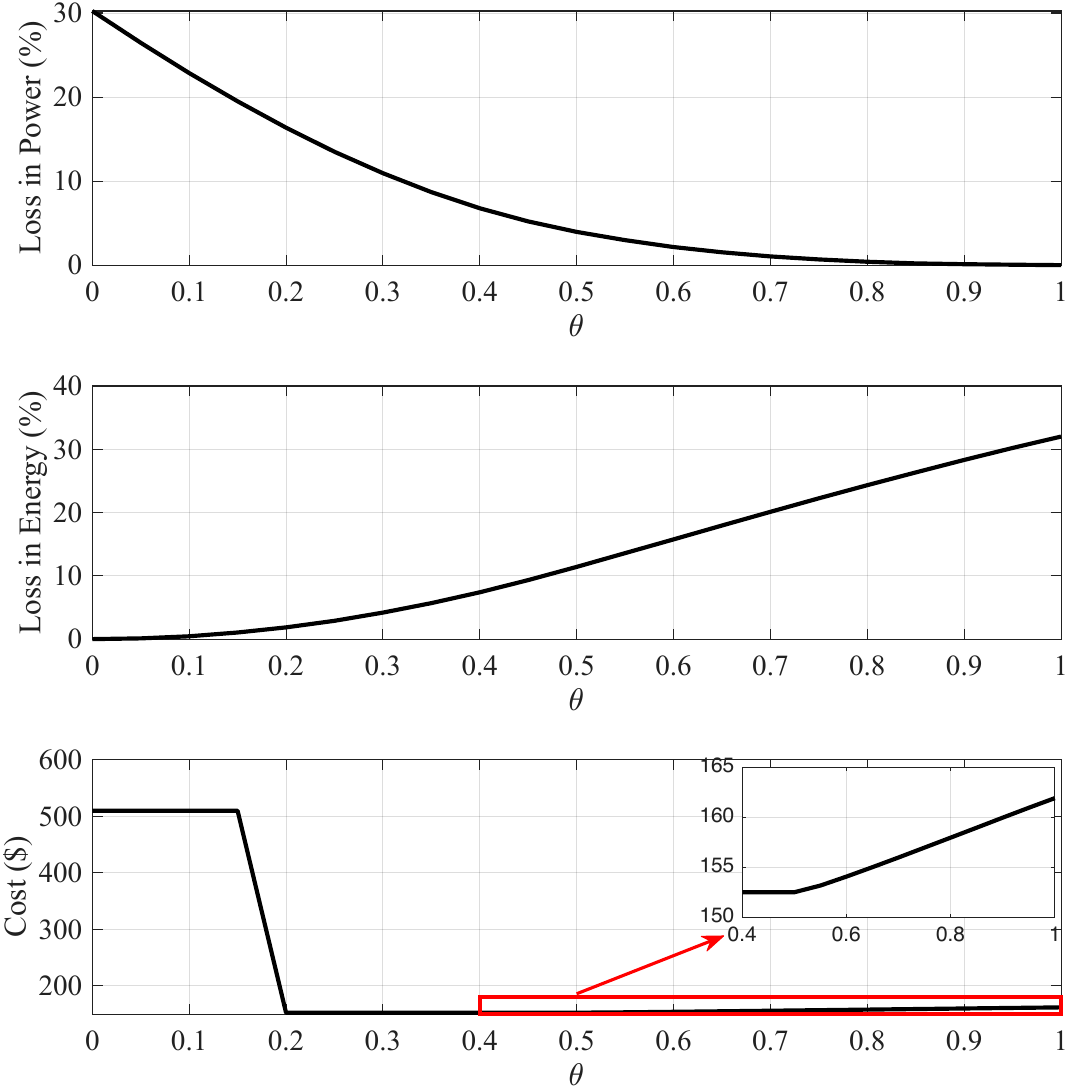}
    \caption{Effect of enforcing a uniform power-to-energy ratio on unit-commitment objective value as the parameter $\theta$ is varied from the largest to the smallest ratio in the battery population.}
    \label{fig:alpha_sweep}
\end{figure}

For large $\theta$, usable energy is significantly reduced while the impact on power capacity is minimal, and vice versa for small $\theta$. At small $\theta$, the loss of usable energy forces a higher cost generator to come online, leading to a sharp cost increase. For the UC use case considered, reductions in energy capacity have a larger impact on total cost than reductions in power capacity. The sweep further shows that the minimum cost is achieved for $\theta\in[0.25,0.55]$, beyond which the cost increases slightly as $\theta$ is increased.

\section{Conclusion}
This paper developed a realizable aggregate modeling framework for heterogeneous battery energy storage fleets while guaranteeing feasibility at the element level. We derived simple linear conditions under which aggregate charging and discharging schedules can be safely disaggregated while respecting element power limits, energy limits, and complementarity constraints.
The resulting RCB model yields realizable dispatch with solve times that are effectively independent of fleet size, and converges to the optimal solution (MILP solution) as control granularity is refined. Additional experiments examined cases with non-uniform power-to-energy ratios and showed how enforcing realizability through parameter limiting restores feasibility, with different limiting choices leading to different levels of conservatism and performance. Future work will focus on relaxing the remaining structural assumptions, extending the framework to market participation and network-constrained settings, and developing stochastic and distributionally robust composite battery formulations to explicitly account for uncertainty in battery parameters.

\section{AI Usage Disclosure}
AI tools were used to assist with improving writing clarity, spelling and grammar checks, and to help identify potentially relevant literature. All technical content, results, and interpretations are the authors’ own, and all cited sources were independently reviewed and selected by the authors.

\clearpage
\appendix 
\subsection{Proof of Proposition~\ref{prop: cutting plane}}\label{app:prop_1_proof}
\begin{proof}
Assume that $P_\mathrm{c}[k]$ and $P_\mathrm{d}[k]$ satisfy~\eqref{eqn:new cut plane}
and that $N_\mathrm{c}(l)+N_\mathrm{d}(l)>N$. Let $C$ and $D$ denote the sets of
elements assigned charging and discharging power by the PSC at control time step
$l$, respectively. Then $|C|=N_\mathrm{c}(l)$, $|D|=N_\mathrm{d}(l)$, and
$N_\mathrm{c}(l)+N_\mathrm{d}(l)>N$ implies $C\cap D\neq \emptyset$.

Under the PSC (Algorithm~\ref{alg:PSC}), all but at most one charging element and
all but at most one discharging element are fully saturated. Let $i_\mathrm{c}\in C$
and $i_\mathrm{d}\in D$ denote the (possibly) marginal charging and discharging
elements, respectively, with $P_\mathrm{c}^{i_\mathrm{c}}(l)>0$ and
$P_\mathrm{d}^{i_\mathrm{d}}(l)>0$. Then, by construction of the PSC,
\begin{align}\label{eq:PcPd_strict}
    P_\mathrm{c}[k] &> \sum_{i\in C\setminus\{i_\mathrm{c}\}} P^i_{\max},
    &
    P_\mathrm{d}[k] &> \sum_{i\in D\setminus\{i_\mathrm{d}\}} P^i_{\max}.
\end{align}

Adding the inequalities in \eqref{eq:PcPd_strict} gives
\begin{align}
    P_\mathrm{c}[k]+P_\mathrm{d}[k]
    &>
    \sum_{i\in C\setminus\{i_\mathrm{c}\}} P^i_{\max}
    +\sum_{i\in D\setminus\{i_\mathrm{d}\}} P^i_{\max} \notag\\
    &=
    \sum_{i=1}^{N} P^i_{\max}
    + \sum_{i\in C\cap D} P^i_{\max}
    - P^{i_\mathrm{c}}_{\max} - P^{i_\mathrm{d}}_{\max}. \label{eq:set_expand}
\end{align}

By design of the PSC, any overlap must satisfy  $i_\text{c},i_\text{d}\in C\cap D$. In the case where $i_\text{c}$ and $i_\text{d}$ are distinct, using  $P^{i_\mathrm{c}}_{\max},P^{i_\mathrm{d}}_{\max}\le \max_i\{P^i_{\max}\}$  we obtain
\begin{align}
    P_\mathrm{c}[k]+P_\mathrm{d}[k]
    &>
    \sum_{i=1}^{N} P^i_{\max}, \label{eq:union_minus_max}
\end{align}
which contradicts \eqref{eqn:new cut plane}. Furthermore, in the case where $i_\text{c}$ and $i_\text{d}$ are the same, we obtain
\begin{align}
    P_\mathrm{c}[k]+P_\mathrm{d}[k]
    &>
    \sum_{i=1}^{N} P^i_{\max} - P^{i_\mathrm{c}}_{\max}  \ge \sum_{i=1}^{N} P^i_{\max} - \max_i\{P^{i}_{\max}\}  , \label{eq:union_minus_max2}
\end{align}
which also contradicts \eqref{eqn:new cut plane}. In both cases we reach a contradiction; therefore $N_\mathrm{c}(l)+N_\mathrm{d}(l)\le N$,
and complementarity is satisfied. 
\end{proof}

\subsection{Proof of Proposition~\ref{prop: SOE difference bound}}\label{app:proof_prop2}

\begin{proof}
Consider any two elements $i$ and $j$ and, without loss of generality, assume $\tilde{E}^i(l)\ge \tilde{E}^j(l)$. Using the element-level SOE dynamics \eqref{eqn:Elements SOC}, we obtain
\begin{align}
\tilde{E}^i(l+1)-\tilde{E}^j(l+1) &= \tilde{E}^i(l)-\tilde{E}^j(l) \notag\\
+ \delta t\!\left(\frac{\eta_\text{c}^i P_\text{c}^i(l)}{E_\text{max}^i}
- \frac{\eta_\text{c}^j P_\text{c}^j(l)}{E_\text{max}^j}\right)
&- \delta t\!\left(\frac{P_\text{d}^i(l)}{\eta_\text{d}^i E_\text{max}^i}
- \frac{P_\text{d}^j(l)}{\eta_\text{d}^j E_\text{max}^j}\right).
\end{align}

Using Property~\ref{property:PSC} and Assumptions~1--3, the above difference is
bounded as
\begin{subequations}\label{eqn:SOC_diff_bound}
\begin{align}
    \tilde{E}^i(l+1)-\tilde{E}^j(l+1)&\le \tilde{E}^i(l)-\tilde{E}^j(l) \le \Delta \tilde{E}(l), \\
    \tilde{E}^i(l+1)-\tilde{E}^j(l+1) &\ge -\delta t\left(\eta_\text{c}\alpha_\text{c} + \frac{\alpha_\text{d}}{\eta_\text{d}}\right) = -\tilde{\varepsilon},\label{eqn:LastIneq}
\end{align}
\end{subequations}
where the inequality in~\eqref{eqn:LastIneq} uses $\tilde{E}^i(l)- \tilde{E}^j(l) \ge 0$. Taking the maximum over all $i,j$ yields $\Delta \tilde{E}(l+1) \leq  \max\left\{ \Delta \tilde{E}(l), \tilde{\varepsilon} \right\}$.
\end{proof}
\subsection{Proof of Proposition~\ref{prop:SOC new limits}}\label{app:proof_prop3}

\begin{proof}
From \eqref{eqn:Norm SOE order} and Proposition~\ref{prop: SOE difference bound}, we obtain
\begin{align}
    \tilde{E}^{\max}(l)-\tilde{E}(l) &\le \tilde{E}^{\max}(l)-\tilde{E}^{\min}(l) \le \tilde{\varepsilon}, \\
    \tilde{E}(l)-\tilde{E}^{\min}(l) &\le \tilde{E}^{\max}(l)-\tilde{E}^{\min}(l) \le \tilde{\varepsilon}.
\end{align}
Therefore, $\tilde{E}(l)\le 1-\tilde{\varepsilon}$ implies $\tilde{E}^{\max}(l)\le 1$, and $\tilde{E}(l)\ge \tilde{\varepsilon}$ implies $\tilde{E}^{\min}(l)\ge 0$. Since these implications hold for all control steps $l$, they therefore also hold for all scheduling timesteps, i.e., for all $k$, completing the proof.
\end{proof}

\subsection{Proof of Theorem~\ref{thm:heterogeneous_composite}}\label{appx_Thrm_proof}
\begin{proof}
Equations~\eqref{eqn:Elements SOC} and \eqref{eqn:SOE mapping} imply the composite dynamics~\eqref{eqn:Thrm_Composite SOE Update}, while \eqref{eqn:Elements SOC IC} and \eqref{eqn:SOE mapping} imply~\eqref{eqn:Thrm_Composite SOE IC}. By design, the PSC never exceeds individual power limits; therefore the aggregate bounds in~\eqref{eqn:Thrm_Pc_limit}--\eqref{eqn:Thrm_Pd_limit} ensure \eqref{eqn:Elements Pc Limit} and \eqref{eqn:Elements Pd Limit} are satisfied. From Proposition~\ref{prop: cutting plane}, the constraint \eqref{eqn:Thrm_new_cut_plane} ensures satisfaction of the element-level complementarity constraint \eqref{eqn:Elements CC} under the PSC. Similarly, if $\max_{i,j}\{|\tilde{E}^i_0-\tilde{E}^j_0|\}\le \tilde{\varepsilon}$, Proposition~\ref{prop:SOC new limits} guarantees that the element energy limits~\eqref{eqn:Elements SOC Limit} are respected whenever \eqref{eqn:Thrm_energy} holds. Therefore, any feasible composite dispatch is realizable under the PSC.
\end{proof}

\subsection{Unit Commitment Problem Formulation}\label{appx_UCproblem}
The unit commitment use case is adapted from~\cite{arroyo_ensuring_2022}. The objective is to minimize the total generation and battery net-charging costs over the scheduling horizon $\mathcal{K} = \{0, \dots, K-1\}$. The problem determines the commitment status $u_{g}[k] \in \{0,1\}$ and dispatch $P_{g}[k]$ for each generator $g \in \mathcal{G}$, as well as the composite battery power and energy variables defined in Section~\ref{subsec:composite_theorem}.
The optimization formulation is given by:

\begin{align}
    \min \sum_{k \in \mathcal{K}} \Delta t \Bigg[ &\sum_{g \in \mathcal{G}} C_{g} P_{g}[k]     + C_{\text{bat}} \left(P_\mathrm{c}[k] - P_\mathrm{d}[k]\right)^2 \Bigg] \label{eqn:UC_Obj}
\end{align}
subject to:
\begin{subequations}
\allowdisplaybreaks
\begin{align}
    & \sum_{g \in \mathcal{G}} P_{g}[k] + P_\mathrm{d}[k] = D[k] + P_\mathrm{c}[k], \quad \forall k \in \mathcal{K} \label{eqn:UC_Balance} \\
    & P_{g}^{\min} u_{g}[k] \le P_{g}[k] \le P_{g}^{\max} u_{g}[k], \quad \forall g, k \label{eqn:UC_GenLimits} \\
    & \sum_{k=0}^{L_g-1} (1-u_{g}[k]) = 0, \quad \forall g : L_g > 0 \label{eqn:UC_InitUp} \\
    & \sum_{\tau=k}^{k+UT_g-1} u_{g}[\tau] \ge UT_g (u_{g}[k] - u_{g}[k-1]), \nonumber \\
    & \quad \forall g, \forall k = L_g, \dots, K - UT_g \label{eqn:UC_MinUp} \\
    & \sum_{\tau=k}^{K-1} (u_{g}[\tau] - (u_{g}[k] - u_{g}[k-1])) \ge 0, \nonumber \\
    & \quad \forall g, \forall k = K - UT_g + 1, \dots, K-1 \label{eqn:UC_EndUp} \\
    & \text{Constraints \eqref{eqn:Thrm_Composite SOE Update}--\eqref{eqn:Thrm_energy}}. \label{eqn:UC_BatConstraints}
\end{align}
\end{subequations}

The objective function \eqref{eqn:UC_Obj} includes linear generation costs $C_g$ and quadratic penalty costs for battery net-charging $C_{\text{bat}}$, as defined in~\cite{arroyo_ensuring_2022}. Constraint \eqref{eqn:UC_Balance} enforces system power balance, where $D[k]$ is the system demand. Constraints \eqref{eqn:UC_GenLimits} enforce generator minimum and maximum output limits bounded by the binary commitment status $u_{g}[k]$. Constraints \eqref{eqn:UC_InitUp}--\eqref{eqn:UC_EndUp} enforce minimum up-time requirements $UT_g$, where $L_g$ represents the number of initial periods the generator must remain on due to prior history. Finally, \eqref{eqn:UC_BatConstraints} incorporates the realizable composite battery constraints derived in this paper.

The parameters used for the unit commitment case study are summarized in Table~\ref{tab:UC_Params}. The system consists of two thermal generators and one aggregate battery energy storage system.

\begin{table}[h]
\renewcommand{\arraystretch}{1.3}
\caption{Unit Commitment Problem Parameters}
\label{tab:UC_Params}
\centering
\begin{tabular}{l c l}
\hline
\textbf{Parameter} & \textbf{Value} & \textbf{Description} \\
\hline
$K$ & 2 h & Prediction horizon \\
$\Delta t$ & 1 h & Scheduling time step duration \\
$D[0]$,$D[1]$ & 45,80 MW & System demand profile \\
\hline
$P_{1}^{\max}$ & 60 MW & Max output for Gen 1 \\
$P_{2}^{\max}$ & 25 MW & Max output for Gen 2 \\
$P_{1}^{\min}$ & 50 MW & Min output for Gen 1 \\
$P_{2}^{\min}$ & 20 MW & Min output for Gen 2 \\
$C_{1}$ & \$1/MWh & Generation cost for Gen 1 \\
$C_{2}$ & \$20/MWh & Generation cost for Gen 2 \\
$UT_{1}$ & 2 h & Minimum up-time for Gen 1 \\
$UT_{2}$ & 0 h & Minimum up-time for Gen 2 \\
$L_{1}, L_{2}$ & 0 & Initial on-time history \\
\hline
$E_{\max}$ & 53.25 MWh & Total energy capacity \\
$P_{\max}$ & 25 MW & Rated power capacity \\
$\eta_{\text{c}}, \eta_{\text{d}}$ & 0.95 & Charging/Discharging efficiency \\
$E_{0}$ & 20 MWh & Initial stored energy \\
$C_{\text{bat}}$ & \$0.1/MW$^2$h & Net charging penalty coefficient \\
\hline
\end{tabular}
\end{table}

\bibliographystyle{ieeetr}
\bibliography{references}

\begin{thebibliography}{10}

\bibitem{FERC2222}
``{Order No. 2222: Participation of Distributed Energy Resource Aggregations in Markets Operated by Regional Transmission Organizations and Independent System Operators}.'' Fact sheet, Federal Energy Regulatory Commission, 2020.
\newblock Accessed: 2026-01-29.

\bibitem{MultiBatteryEV2023}
F.~Al~Taha, T.~Vincent, and E.~Bitar, ``A multi-battery model for the aggregate flexibility of heterogeneous electric vehicles,'' in {\em 2023 American Control Conference (ACC)}, pp.~1243--1250, 2023.

\bibitem{Muller2019AggDisagg}
F.~L. Müller, J.~Szabó, O.~Sundström, and J.~Lygeros, ``Aggregation and disaggregation of energetic flexibility from distributed energy resources,'' {\em IEEE Transactions on Smart Grid}, vol.~10, no.~2, pp.~1205--1214, 2019.

\bibitem{Wen2022ExactAFR}
Y.~Wen, Z.~Hu, S.~You, and X.~Duan, ``Aggregate feasible region of ders: Exact formulation and approximate models,'' {\em IEEE Transactions on Smart Grid}, vol.~13, no.~6, pp.~4405--4423, 2022.

\bibitem{Ozturk2024VertexBased}
E.~Öztürk, T.~Faulwasser, K.~Worthmann, M.~PreißInger, and K.~Rheinberger, ``Alleviating the curse of dimensionality in minkowski sum approximations of storage flexibility,'' {\em IEEE Transactions on Smart Grid}, vol.~15, no.~6, pp.~5733--5743, 2024.

\bibitem{Muller2015Zonotopes}
F.~L. Müller, O.~Sundström, J.~Szabó, and J.~Lygeros, ``Aggregation of energetic flexibility using zonotopes,'' in {\em 2015 54th IEEE Conference on Decision and Control (CDC)}, pp.~6564--6569, 2015.

\bibitem{Nazir2018UnionBasedMinkowski}
M.~S. Nazir, I.~A. Hiskens, A.~Bernstein, and E.~Dall'Anese, ``Inner approximation of minkowski sums: A union-based approach and applications to aggregated energy resources,'' in {\em 2018 IEEE Conference on Decision and Control (CDC)}, pp.~5708--5715, 2018.

\bibitem{prat_2024}
E.~Prat, {\em Market-based scheduling of energy storage systems: Optimality guarantees}.
\newblock PhD thesis, DTU - Technical University of Denmark, Lyngby, 11 2024.

\bibitem{RobustNawaf}
N.~Nazir and M.~Almassalkhi, ``Guaranteeing a physically realizable battery dispatch without charge-discharge complementarity constraints,'' {\em IEEE Transactions on Smart Grid}, vol.~14, no.~3, pp.~2473--2476, 2023.

\bibitem{MILP_for_CC}
J.~Hu, J.~E. Mitchell, J.-S. Pang, K.~P. Bennett, and G.~Kunapuli, ``On the global solution of linear programs with linear complementarity constraints,'' {\em SIAM Journal on Optimization}, vol.~19, no.~1, pp.~445--471, 2008.

\bibitem{Sufficient_Cond_for_Relax}
Z.~Li, Q.~Guo, H.~Sun, and J.~Wang, ``Sufficient conditions for exact relaxation of complementarity constraints for storage-concerned economic dispatch,'' {\em IEEE Transactions on Power Systems}, vol.~31, no.~2, pp.~1653--1654, 2016.

\bibitem{Lin_Exact_Cond_Ex_Ante}
W.~Lin, C.~Y. Chung, and C.~Zhao, ``Relaxing complementarity constraints of energy storage with feasibility and optimality guarantees,'' in {\em IEEE Power and Energy Society General Meeting}, pp.~1--5, 2023.

\bibitem{positive_prices}
P.~Haessig, ``Convex storage loss modeling for optimal energy management,'' in {\em IEEE Madrid PowerTech}, pp.~1--6, 2021.

\bibitem{Jakob_Pos_Quadratic_Cost}
D.~Wu, T.~Yang, A.~A. Stoorvogel, and J.~Stoustrup, ``Distributed optimal coordination for distributed energy resources in power systems,'' {\em IEEE Transactions on Automation Science and Engineering}, vol.~14, no.~2, pp.~414--424, 2017.

\bibitem{garifi_convex_2020}
K.~Garifi, K.~Baker, D.~Christensen, and B.~Touri, ``Convex {Relaxation} of {Grid}-{Connected} {Energy} {Storage} {System} {Models} {With} {Complementarity} {Constraints} in {DC} {OPF},'' {\em IEEE Transactions on Smart Grid}, vol.~11, pp.~4070--4079, Sept. 2020.

\bibitem{Composite_Homogeneous}
M.~Elsaadany, M.~R. Almassalkhi, and S.~H. Tindemans, ``Linear aggregate model for realizable dispatch of homogeneous energy storage,'' {\em IEEE Control Systems Letters}, vol.~9, pp.~1267--1272, 2025.

\bibitem{karan_kalsi}
S.~P. Nandanoori, S.~Kundu, D.~Vrabie, K.~Kalsi, and J.~Lian, ``Prioritized threshold allocation for distributed frequency response,'' in {\em 2018 IEEE Conference on Control Technology and Applications}, pp.~237--244, 2018.

\bibitem{equal_unequal}
M.~Bauer, M.~Muehlbauer, O.~Bohlen, M.~A. Danzer, and J.~Lygeros, ``Power flow in heterogeneous battery systems,'' {\em Journal of Energy Storage}, vol.~25, p.~100816, 2019.

\bibitem{equal_unequal2}
M.~Mühlbauer, O.~Bohlen, and M.~A. Danzer, ``Analysis of power flow control strategies in heterogeneous battery energy storage systems,'' {\em Journal of Energy Storage}, vol.~30, p.~101415, 2020.

\bibitem{tesla_2019}
{Tesla Inc.}, ``Tesla powerwall datasheet,'' Sept. 2022.
\newblock \url{https://tesla-cdn.thron.com/static/KBQ2AZ_Tesla_Powerwall__Datasheet_NA-EN_Y3THJT.pdf?xseo=&response-content-disposition=inline\%3Bfilename\%3D\%22powerwall-plus-datasheet-na-en.pdf\%22} (accessed Oct. 3, 2022).

\bibitem{arroyo_ensuring_2022}
J.~M. Arroyo, ``Ensuring {Physically} {Realizable} {Storage} {Operation} in the {Unit} {Commitment} {Problem},'' {\em IEEE Transactions on Power Systems}, vol.~37, pp.~4966--4969, Nov. 2022.

\end{thebibliography}

\endgroup
\end{document}